\theoremstyle{definition}
\newtheorem{lemma}{Lemma}
\theoremstyle{remark}
\newcommand*{\mybox}[1]{%
  \framebox{\raisebox{0cm}[0.5\baselineskip][0.05\baselineskip]{%
    \hbox to 0.1cm {\hss#1\hss}}}\hspace{0.05cm}}
\begin{document}
\title{Physical Zero-Knowledge Proof for Ripple Effect\thanks{A preliminary version of this paper \cite{ripple} has appeared at WALCOM 2021.}}
\author[1]{Suthee Ruangwises\thanks{\texttt{ruangwises@gmail.com}}}
\author[1]{Toshiya Itoh\thanks{\texttt{titoh@c.titech.ac.jp}}}
\affil[1]{Department of Mathematical and Computing Science, Tokyo Institute of Technology, Tokyo, Japan}
\date{}
\maketitle

\begin{abstract}
Ripple Effect is a logic puzzle where the player has to fill numbers into empty cells in a rectangular grid. The grid is divided into rooms, and each room must contain consecutive integers starting from 1 to its size. Also, if two cells in the same row or column contain the same number $x$, there must be a space of at least $x$ cells separating the two cells. In this paper, we develop a physical zero-knowledge proof for the Ripple Effect puzzle using a deck of cards, which allows a prover to convince a verifier that he/she knows a solution without revealing it. In particular, given a secret number $x$ and a list of numbers, our protocol can physically verify that $x$ does not appear among the first $x$ numbers in the list without revealing $x$ or any number in the list.

\textbf{Keywords:} zero-knowledge proof, card-based cryptography, Ripple Effect, puzzle
\end{abstract}

\section{Introduction}
\textit{Ripple Effect} (also known as \textit{Hakyu}, \textit{Hakyukoka}, or \textit{Seismic}) is a logic puzzle created by a Japanese company Nikoli, the developer of many famous logic puzzles including Sudoku, Numberlink, and Bridges. A Ripple Effect puzzle consists of a rectangular grid, which is divided into polyominoes called \textit{rooms}. Some cells in the grid comes with a number already in them; we call these cells \textit{fixed cells} and other cells \textit{empty cells}. The objective of this puzzle is to fill a number into each empty cell according to the two following conditions \cite{nikoli}.
\begin{enumerate}
	\item \textit{Room condition}: Each room must contain consecutive integers starting from 1 to its \textit{size} (the number of cells in the room).
	\item \textit{Distance condition}: If two cells in the same row or column contain the same number $x$, there must be a space of at least $x$ cells separating the two cells. See Figure \ref{fig1}.
\end{enumerate}

\begin{figure}
\centering
\begin{tikzpicture}
\draw[step=0.6cm,color={rgb:black,1;white,4}] (0,0) grid (4.2,4.2);

\draw[line width=0.5mm] (0,0) -- (0,4.2);
\draw[line width=0.5mm] (0.6,0.6) -- (0.6,2.4);
\draw[line width=0.5mm] (0.6,3.6) -- (0.6,4.2);
\draw[line width=0.5mm] (1.2,0) -- (1.2,0.6);
\draw[line width=0.5mm] (1.2,1.2) -- (1.2,1.8);
\draw[line width=0.5mm] (1.2,2.4) -- (1.2,4.2);
\draw[line width=0.5mm] (1.8,0.6) -- (1.8,1.8);
\draw[line width=0.5mm] (1.8,2.4) -- (1.8,3.6);
\draw[line width=0.5mm] (2.4,0) -- (2.4,0.6);
\draw[line width=0.5mm] (2.4,1.2) -- (2.4,2.4);
\draw[line width=0.5mm] (2.4,3) -- (2.4,4.2);
\draw[line width=0.5mm] (3,0.6) -- (3,1.2);
\draw[line width=0.5mm] (3,2.4) -- (3,3.6);
\draw[line width=0.5mm] (3.6,0) -- (3.6,0.6);
\draw[line width=0.5mm] (3.6,1.2) -- (3.6,2.4);
\draw[line width=0.5mm] (4.2,0) -- (4.2,4.2);
\draw[line width=0.5mm] (0,0) -- (4.2,0);
\draw[line width=0.5mm] (0.6,0.6) -- (1.2,0.6);
\draw[line width=0.5mm] (1.8,0.6) -- (2.4,0.6);
\draw[line width=0.5mm] (3,0.6) -- (3.6,0.6);
\draw[line width=0.5mm] (1.2,1.2) -- (1.8,1.2);
\draw[line width=0.5mm] (2.4,1.2) -- (3,1.2);
\draw[line width=0.5mm] (3.6,1.2) -- (4.2,1.2);
\draw[line width=0.5mm] (0.6,1.8) -- (3.6,1.8);
\draw[line width=0.5mm] (0,2.4) -- (1.2,2.4);
\draw[line width=0.5mm] (1.8,2.4) -- (2.4,2.4);
\draw[line width=0.5mm] (3,2.4) -- (3.6,2.4);
\draw[line width=0.5mm] (1.2,3) -- (1.8,3);
\draw[line width=0.5mm] (2.4,3) -- (3,3);
\draw[line width=0.5mm] (0.6,3.6) -- (1.2,3.6);
\draw[line width=0.5mm] (1.8,3.6) -- (2.4,3.6);
\draw[line width=0.5mm] (3,3.6) -- (4.2,3.6);
\draw[line width=0.5mm] (0,4.2) -- (4.2,4.2);

\node at (2.1,0.3) {4};
\node at (3.9,0.3) {1};
\node at (2.1,1.5) {5};
\node at (3.9,1.5) {6};
\node at (0.3,2.7) {3};
\node at (1.5,3.3) {2};
\end{tikzpicture}
\hspace{1.2cm}
\begin{tikzpicture}
\draw[step=0.6cm,color={rgb:black,1;white,4}] (0,0) grid (4.2,4.2);

\draw[line width=0.5mm] (0,0) -- (0,4.2);
\draw[line width=0.5mm] (0.6,0.6) -- (0.6,2.4);
\draw[line width=0.5mm] (0.6,3.6) -- (0.6,4.2);
\draw[line width=0.5mm] (1.2,0) -- (1.2,0.6);
\draw[line width=0.5mm] (1.2,1.2) -- (1.2,1.8);
\draw[line width=0.5mm] (1.2,2.4) -- (1.2,4.2);
\draw[line width=0.5mm] (1.8,0.6) -- (1.8,1.8);
\draw[line width=0.5mm] (1.8,2.4) -- (1.8,3.6);
\draw[line width=0.5mm] (2.4,0) -- (2.4,0.6);
\draw[line width=0.5mm] (2.4,1.2) -- (2.4,2.4);
\draw[line width=0.5mm] (2.4,3) -- (2.4,4.2);
\draw[line width=0.5mm] (3,0.6) -- (3,1.2);
\draw[line width=0.5mm] (3,2.4) -- (3,3.6);
\draw[line width=0.5mm] (3.6,0) -- (3.6,0.6);
\draw[line width=0.5mm] (3.6,1.2) -- (3.6,2.4);
\draw[line width=0.5mm] (4.2,0) -- (4.2,4.2);
\draw[line width=0.5mm] (0,0) -- (4.2,0);
\draw[line width=0.5mm] (0.6,0.6) -- (1.2,0.6);
\draw[line width=0.5mm] (1.8,0.6) -- (2.4,0.6);
\draw[line width=0.5mm] (3,0.6) -- (3.6,0.6);
\draw[line width=0.5mm] (1.2,1.2) -- (1.8,1.2);
\draw[line width=0.5mm] (2.4,1.2) -- (3,1.2);
\draw[line width=0.5mm] (3.6,1.2) -- (4.2,1.2);
\draw[line width=0.5mm] (0.6,1.8) -- (3.6,1.8);
\draw[line width=0.5mm] (0,2.4) -- (1.2,2.4);
\draw[line width=0.5mm] (1.8,2.4) -- (2.4,2.4);
\draw[line width=0.5mm] (3,2.4) -- (3.6,2.4);
\draw[line width=0.5mm] (1.2,3) -- (1.8,3);
\draw[line width=0.5mm] (2.4,3) -- (3,3);
\draw[line width=0.5mm] (0.6,3.6) -- (1.2,3.6);
\draw[line width=0.5mm] (1.8,3.6) -- (2.4,3.6);
\draw[line width=0.5mm] (3,3.6) -- (4.2,3.6);
\draw[line width=0.5mm] (0,4.2) -- (4.2,4.2);

\node at (0.3,0.3) {3};
\node at (0.9,0.3) {2};
\node at (1.5,0.3) {5};
\node at (2.1,0.3) {4};
\node at (2.7,0.3) {3};
\node at (3.3,0.3) {2};
\node at (3.9,0.3) {1};
\node at (0.3,0.9) {5};
\node at (0.9,0.9) {1};
\node at (1.5,0.9) {2};
\node at (2.1,0.9) {1};
\node at (2.7,0.9) {4};
\node at (3.3,0.9) {3};
\node at (3.9,0.9) {5};
\node at (0.3,1.5) {4};
\node at (0.9,1.5) {3};
\node at (1.5,1.5) {1};
\node at (2.1,1.5) {5};
\node at (2.7,1.5) {2};
\node at (3.3,1.5) {4};
\node at (3.9,1.5) {6};
\node at (0.3,2.1) {1};
\node at (0.9,2.1) {2};
\node at (1.5,2.1) {4};
\node at (2.1,2.1) {3};
\node at (2.7,2.1) {5};
\node at (3.3,2.1) {1};
\node at (3.9,2.1) {2};
\node at (0.3,2.7) {3};
\node at (0.9,2.7) {4};
\node at (1.5,2.7) {1};
\node at (2.1,2.7) {2};
\node at (2.7,2.7) {3};
\node at (3.3,2.7) {5};
\node at (3.9,2.7) {4};
\node at (0.3,3.3) {1};
\node at (0.9,3.3) {5};
\node at (1.5,3.3) {2};
\node at (2.1,3.3) {4};
\node at (2.7,3.3) {1};
\node at (3.3,3.3) {3};
\node at (3.9,3.3) {1};
\node at (0.3,3.9) {2};
\node at (0.9,3.9) {1};
\node at (1.5,3.9) {3};
\node at (2.1,3.9) {1};
\node at (2.7,3.9) {4};
\node at (3.3,3.9) {2};
\node at (3.9,3.9) {3};
\end{tikzpicture}
\caption{An example of a Ripple Effect puzzle (left) and its solution (right)}
\label{fig1}
\end{figure}

Patricia, an expert in Ripple Effect, created a difficult Ripple Effect puzzle and challenged her friend Victor to solve it. After a while, Victor could not solve her puzzle and started to doubt whether the puzzle has a solution. Patricia needs to convince him that her puzzle indeed has a solution without showing it (which would render the whole challenge pointless). In this situation, Patricia needs a \textit{zero-knowledge proof (ZKP)} to convince Victor.

\subsection{Zero-Knowledge Proof}
A ZKP is an interactive protocol between a prover $P$ and a verifier $V$. Both $P$ and $V$ are given a computational problem $x$, but only $P$ knows a solution $w$ of $x$. A ZKP enables $P$ to convince $V$ that he/she knows $w$ without revealing any information about it. A ZKP must satisfy the following three properties.
\begin{enumerate}
	\item \textbf{Completeness:} If $P$ knows $w$, then $V$ accepts with high probability. (Here we consider the \textit{perfect completeness} property where the probability of acceptance is one.)
	\item \textbf{Soundness:} If $P$ does not know $w$, then $V$ accepts with low probability. (Here we consider the \textit{perfect soundness} property where the probability of acceptance is zero.)
	\item \textbf{Zero-knowledge:} During the verification, $V$ gets no extra information about $w$. Formally, there exists a probabilistic polynomial time algorithm $S$ (called a \textit{simulator}), without an access to $P$ but with a black-box access to $V$, such that the outputs of $S$ follow the same probability distribution as the outputs of the actual protocol.
\end{enumerate}

Goldwasser et al. \cite{zkp0} was the first one to introduce the concept of a ZKP, and Goldreich et al. \cite{zkp} later proved that a computational ZKP exists for every NP problem. As Ripple Effect has been proved to be NP-complete \cite{np}, one can construct a computational ZKP for it. However, such construction is not intuitive or practical as it requires cryptographic primitives.

Instead, many previous results focused on constructing physical ZKPs using a deck of cards. These protocols have benefits that they use only portable objects without requiring computers and also allow external observers to check that the prover truthfully executes the protocol (which is often a challenging task for digital protocols). They also have a great didactic value.

\subsection{Related Work}
Development of card-based ZKPs for logic puzzles began in 2009 with a ZKP for Sudoku of Gradwohl et al. \cite{sudoku0}. However, each of several variants of their protocol either requires special tools or has a nonzero soundness error. Later, Sasaki et al. \cite{sudoku} improved the ZKP for Sudoku to achieve perfect soundness property without using special tools. Recently, Ruangwises \cite{sudoku2} also developed another ZKP for Sudoku using a deck of all different cards. Besides Sudoku, ZKPs for other (mostly Nikoli) logic puzzles have also been developed, including Akari \cite{akari}, Bridges \cite{bridges}, Hitori \cite{nurikabe}, Juosan \cite{takuzu}, Kakuro \cite{akari,kakuro}, KenKen \cite{akari}, Makaro \cite{makaro}, Nonogram \cite{nonogram}, Norinori \cite{norinori}, Numberlink \cite{numberlink}, Nurikabe \cite{nurikabe}, Slitherlink \cite{slitherlink}, Suguru \cite{suguru}, and Takuzu \cite{akari,takuzu}.

The underlying techniques employed by these protocols can physically verify specific number-related functions. For example, a subprotocol in \cite{makaro} verifies that two given numbers are different without revealing their values, and a subprotocol in \cite{sudoku0} verifies that a given list is a permutation of all given numbers in some order without revealing their order.

\subsection{Our Contribution}
In this paper, we develop a physical ZKP with perfect completeness and perfect soundness properties for Ripple Effect puzzle using a deck of cards. More importantly, we extend the set of functions that are known to be physically verifiable. In particular, we develop a physical protocol that, given a secret number $x$ and a list of numbers, verifies that $x$ does not appear among the first $x$ numbers in the list without revealing $x$ or any number in the list.

Unlike the functions verified by many protocols in previous work (see the full analysis in Section \ref{analysis}), the function our protocol has to verify uses cardinal numbers as private information; it uses the secret value $x$ to determine how many elements in the list the condition is imposed on. Therefore, this function is significantly harder to verify \textit{without revealing $x$}, and thus requires novel techniques not used before in other protocols. We consider this result to be an important step in card-based cryptography.

\section{Preliminaries}
\subsection{Cards}
Each card used in our protocol has either $\clubsuit$ or $\heartsuit$ on the front side. All cards have indistinguishable back sides.

For $1 \leq x \leq y$, define $E_y(x)$ to be a sequence of consecutive $y$ cards arranged horizontally, with all of them being \mybox{$\clubsuit$} except the $x$-th card from the left being \mybox{$\heartsuit$}, e.g. $E_3(3)$ is \mbox{\mybox{$\clubsuit$}\mybox{$\clubsuit$}\mybox{$\heartsuit$}} and $E_4(2)$ is \mbox{\mybox{$\clubsuit$}\mybox{$\heartsuit$}\mybox{$\clubsuit$}\mybox{$\clubsuit$}}. We use $E_y(x)$ to encode a number $x$ in a situation where the maximum possible number is at most $y$. This encoding rule was first considered by Shinagawa et al. \cite{polygon} in the context of using a regular $y$-gon card to encode each integer in $\mathbb{Z}/y\mathbb{Z}$. Also, we define $E_y(0)$ to be a sequence of consecutive $y$ cards, all of them being \mybox{$\clubsuit$}. We use $E_y(0)$ to encode a number 0.

Normally, the cards in $E_y(x)$ are arranged horizontally as defined above unless stated otherwise. In some situations, however, we may arrange the cards vertically, where the leftmost card will become the topmost card and the rightmost card will become the bottommost card.

\subsection{Matrix}
We construct an $a \times b$ \textit{matrix} of cards. Let Row $i$ denote the $i$-th topmost row, and Column $j$ denote the $j$-th leftmost column. Let $M(i,j)$ denote the card at Row $i$ and Column $j$ of a matrix $M$. See Figure \ref{fig2}.

\begin{figure}[H]
\centering
\begin{tikzpicture}
\node at (0,0) {\mybox{?}};
\node at (0.5,0) {\mybox{?}};
\node at (1,0) {\mybox{?}};
\node at (1.5,0) {\mybox{?}};
\node at (2,0) {\mybox{?}};
\node at (2.5,0) {\mybox{?}};

\node at (0,0.6) {\mybox{?}};
\node at (0.5,0.6) {\mybox{?}};
\node at (1,0.6) {\mybox{?}};
\node at (1.5,0.6) {\mybox{?}};
\node at (2,0.6) {\mybox{?}};
\node at (2.5,0.6) {\mybox{?}};

\node at (0,1.2) {\mybox{?}};
\node at (0.5,1.2) {\mybox{?}};
\node at (1,1.2) {\mybox{?}};
\node at (1.5,1.2) {\mybox{?}};
\node at (2,1.2) {\mybox{?}};
\node at (2.5,1.2) {\mybox{?}};

\node at (0,1.8) {\mybox{?}};
\node at (0.5,1.8) {\mybox{?}};
\node at (1,1.8) {\mybox{?}};
\node at (1.5,1.8) {\mybox{?}};
\node at (2,1.8) {\mybox{?}};
\node at (2.5,1.8) {\mybox{?}};

\node at (0,2.4) {\mybox{?}};
\node at (0.5,2.4) {\mybox{?}};
\node at (1,2.4) {\mybox{?}};
\node at (1.5,2.4) {\mybox{?}};
\node at (2,2.4) {\mybox{?}};
\node at (2.5,2.4) {\mybox{?}};

\draw[] (-0.3,-0.3) -- (-0.3,3.9);
\draw[] (-1.8,2.8) -- (2.8,2.8);

\node at (-0.6,0) {5};
\node at (-0.6,0.6) {4};
\node at (-0.6,1.2) {3};
\node at (-0.6,1.8) {2};
\node at (-0.6,2.4) {1};
\node at (-1.3,1.2) {Row};

\node at (0,3.1) {1};
\node at (0.5,3.1) {2};
\node at (1,3.1) {3};
\node at (1.5,3.1) {4};
\node at (2,3.1) {5};
\node at (2.5,3.1) {6};
\node at (1.25,3.6) {Column};
\end{tikzpicture}
\caption{An example of a $5 \times 6$ matrix}
\label{fig2}
\end{figure}

\subsection{Pile-Shifting Shuffle}
In a \textit{pile-shifting shuffle}, we rearrange the columns of the matrix by a random cyclic permutation, i.e. move every Column $\ell$ to Column $\ell+r$ for a uniformly random $r \in \{0,1,...,$ $b-1\}$ (where Column $\ell'$ means Column $\ell'-b$ for $\ell'>b$). See Figure \ref{fig3}.

\begin{figure}[H]
\centering
\begin{tikzpicture}
\node at (0,0) {\mybox{?}};
\node at (0.5,0) {\mybox{?}};
\node at (1,0) {\mybox{?}};
\node at (1.5,0) {\mybox{?}};
\node at (2,0) {\mybox{?}};
\node at (2.5,0) {\mybox{?}};

\node at (0,0.6) {\mybox{?}};
\node at (0.5,0.6) {\mybox{?}};
\node at (1,0.6) {\mybox{?}};
\node at (1.5,0.6) {\mybox{?}};
\node at (2,0.6) {\mybox{?}};
\node at (2.5,0.6) {\mybox{?}};

\node at (0,1.2) {\mybox{?}};
\node at (0.5,1.2) {\mybox{?}};
\node at (1,1.2) {\mybox{?}};
\node at (1.5,1.2) {\mybox{?}};
\node at (2,1.2) {\mybox{?}};
\node at (2.5,1.2) {\mybox{?}};

\node at (0,1.8) {\mybox{?}};
\node at (0.5,1.8) {\mybox{?}};
\node at (1,1.8) {\mybox{?}};
\node at (1.5,1.8) {\mybox{?}};
\node at (2,1.8) {\mybox{?}};
\node at (2.5,1.8) {\mybox{?}};

\node at (0,2.4) {\mybox{?}};
\node at (0.5,2.4) {\mybox{?}};
\node at (1,2.4) {\mybox{?}};
\node at (1.5,2.4) {\mybox{?}};
\node at (2,2.4) {\mybox{?}};
\node at (2.5,2.4) {\mybox{?}};

\node at (-0.4,0) {5};
\node at (-0.4,0.6) {4};
\node at (-0.4,1.2) {3};
\node at (-0.4,1.8) {2};
\node at (-0.4,2.4) {1};

\node at (0,2.9) {1};
\node at (0.5,2.9) {2};
\node at (1,2.9) {3};
\node at (1.5,2.9) {4};
\node at (2,2.9) {5};
\node at (2.5,2.9) {6};

\node at (3.4,1.2) {\LARGE{$\Rightarrow$}};
\end{tikzpicture}
\begin{tikzpicture}
\node at (0,0) {\mybox{?}};
\node at (0.5,0) {\mybox{?}};
\node at (1,0) {\mybox{?}};
\node at (1.5,0) {\mybox{?}};
\node at (2,0) {\mybox{?}};
\node at (2.5,0) {\mybox{?}};

\node at (0,0.6) {\mybox{?}};
\node at (0.5,0.6) {\mybox{?}};
\node at (1,0.6) {\mybox{?}};
\node at (1.5,0.6) {\mybox{?}};
\node at (2,0.6) {\mybox{?}};
\node at (2.5,0.6) {\mybox{?}};

\node at (0,1.2) {\mybox{?}};
\node at (0.5,1.2) {\mybox{?}};
\node at (1,1.2) {\mybox{?}};
\node at (1.5,1.2) {\mybox{?}};
\node at (2,1.2) {\mybox{?}};
\node at (2.5,1.2) {\mybox{?}};

\node at (0,1.8) {\mybox{?}};
\node at (0.5,1.8) {\mybox{?}};
\node at (1,1.8) {\mybox{?}};
\node at (1.5,1.8) {\mybox{?}};
\node at (2,1.8) {\mybox{?}};
\node at (2.5,1.8) {\mybox{?}};

\node at (0,2.4) {\mybox{?}};
\node at (0.5,2.4) {\mybox{?}};
\node at (1,2.4) {\mybox{?}};
\node at (1.5,2.4) {\mybox{?}};
\node at (2,2.4) {\mybox{?}};
\node at (2.5,2.4) {\mybox{?}};

\node at (-0.4,0) {5};
\node at (-0.4,0.6) {4};
\node at (-0.4,1.2) {3};
\node at (-0.4,1.8) {2};
\node at (-0.4,2.4) {1};

\node at (0,2.9) {5};
\node at (0.5,2.9) {6};
\node at (1,2.9) {1};
\node at (1.5,2.9) {2};
\node at (2,2.9) {3};
\node at (2.5,2.9) {4};
\end{tikzpicture}
\caption{An example of a pile-shifting shuffle on a $5 \times 6$ matrix with $r=2$}
\label{fig3}
\end{figure}

The pile-shifting shuffle was developed by Shinagawa et al. \cite{polygon}. One can perform this protocol in real world by putting the cards in each column into an envelope and then applying a \textit{Hindu cut}, a basic shuffling operation commonly used in card games \cite{hindu}, to the sequence of envelopes.

\subsection{Rearrangement Protocol}
The sole purpose of a \textit{rearrangement protocol} is to revert columns of a matrix back to their original positions (after we perform pile-shifting shuffles) so that we can reuse all cards in the matrix without revealing them. Slightly different variants of this protocol were used in some previous card-based protocols \cite{makaro,revert1,revert2,numberlink,sudoku}.

Note that throughout our main protocol, we always put $E_b(1)$ in Row 1 when constructing a new matrix, hence we want to ensure that a \mybox{$\heartsuit$} in Row 1 moves back to Column 1. We apply the rearrangement protocol on an $a \times b$ matrix by publicly performing the following steps.

\begin{enumerate}
	\item Apply the pile-shifting shuffle to the matrix.
	\item Turn over all cards in Row 1. Locate the position of a \mybox{$\heartsuit$}. Suppose it is at Column $j$. Turn over all face-up cards.
	\item Shift the columns of the matrix to the left by $j-1$ columns, i.e. move every Column $\ell$ to Column $\ell-(j-1)$ (where Column $\ell'$ means Column $\ell'+b$ for $\ell'<1$).
\end{enumerate}

\subsection{Uniqueness Verification Protocol} \label{unique}
Suppose we have sequences $S_0,S_1,...,S_a$, each consisting of $b$ cards. $S_0$ encodes a positive number, while $S_1,S_2,...,S_a$ encode nonnegative numbers. Our objective is to verify that none of the sequences $S_1,S_2,...,S_a$ encodes the same number as $S_0$ without revealing any encoded number. This protocol is a special case of a protocol developed by Ruangwises and Itoh \cite{numberlink} to count the number of indices $i$ such that $S_i$ encodes the same number as $S_0$. We apply the uniqueness verification protocol by publicly performing the following steps.

\begin{enumerate}
	\item Construct an $(a+2) \times b$ matrix with Row 1 consisting of a sequence $E_b(1)$ and each Row $i+2$ ($i=0,1,...,a$) consisting of the sequence $S_i$.
	\item Apply the pile-shifting shuffle to the matrix.
	\item Turn over all cards in Row 2. Locate the position of a \mybox{$\heartsuit$}. Suppose it is at Column $j$.\footnote{Note that this step also requires that $S_0$ is in a correct format ($E_b(x)$ for some positive integer $x \leq b$). If $V$ sees anything other than one \mybox{$\heartsuit$} and $b-1$ \mybox{$\clubsuit$}s, then $V$ immediately rejects in this step.}
	\item Turn over all cards in Column $j$ from Row 3 to Row $a+2$. If there is no \mybox{$\heartsuit$} among them, then the protocol continues. Otherwise, $V$ rejects and the protocol terminates.
	\item Turn over all face-up cards.
\end{enumerate}

\subsection{Pile-Scramble Shuffle}
In a \textit{pile-scramble shuffle}, we rearrange the columns of the matrix by a random permutation, i.e. move every Column $j$ to Column $p_j$ for a uniformly random permutation $p=(p_1,p_2,...,p_b)$ of $(1,2,...,b)$. See Figure \ref{fig4}.

\begin{figure}[H]
\centering
\begin{tikzpicture}
\node at (0,0) {\mybox{?}};
\node at (0.5,0) {\mybox{?}};
\node at (1,0) {\mybox{?}};
\node at (1.5,0) {\mybox{?}};
\node at (2,0) {\mybox{?}};
\node at (2.5,0) {\mybox{?}};

\node at (0,0.6) {\mybox{?}};
\node at (0.5,0.6) {\mybox{?}};
\node at (1,0.6) {\mybox{?}};
\node at (1.5,0.6) {\mybox{?}};
\node at (2,0.6) {\mybox{?}};
\node at (2.5,0.6) {\mybox{?}};

\node at (0,1.2) {\mybox{?}};
\node at (0.5,1.2) {\mybox{?}};
\node at (1,1.2) {\mybox{?}};
\node at (1.5,1.2) {\mybox{?}};
\node at (2,1.2) {\mybox{?}};
\node at (2.5,1.2) {\mybox{?}};

\node at (0,1.8) {\mybox{?}};
\node at (0.5,1.8) {\mybox{?}};
\node at (1,1.8) {\mybox{?}};
\node at (1.5,1.8) {\mybox{?}};
\node at (2,1.8) {\mybox{?}};
\node at (2.5,1.8) {\mybox{?}};

\node at (0,2.4) {\mybox{?}};
\node at (0.5,2.4) {\mybox{?}};
\node at (1,2.4) {\mybox{?}};
\node at (1.5,2.4) {\mybox{?}};
\node at (2,2.4) {\mybox{?}};
\node at (2.5,2.4) {\mybox{?}};

\node at (-0.4,0) {5};
\node at (-0.4,0.6) {4};
\node at (-0.4,1.2) {3};
\node at (-0.4,1.8) {2};
\node at (-0.4,2.4) {1};

\node at (0,2.9) {1};
\node at (0.5,2.9) {2};
\node at (1,2.9) {3};
\node at (1.5,2.9) {4};
\node at (2,2.9) {5};
\node at (2.5,2.9) {6};

\node at (3.4,1.2) {\LARGE{$\Rightarrow$}};
\end{tikzpicture}
\begin{tikzpicture}
\node at (0,0) {\mybox{?}};
\node at (0.5,0) {\mybox{?}};
\node at (1,0) {\mybox{?}};
\node at (1.5,0) {\mybox{?}};
\node at (2,0) {\mybox{?}};
\node at (2.5,0) {\mybox{?}};

\node at (0,0.6) {\mybox{?}};
\node at (0.5,0.6) {\mybox{?}};
\node at (1,0.6) {\mybox{?}};
\node at (1.5,0.6) {\mybox{?}};
\node at (2,0.6) {\mybox{?}};
\node at (2.5,0.6) {\mybox{?}};

\node at (0,1.2) {\mybox{?}};
\node at (0.5,1.2) {\mybox{?}};
\node at (1,1.2) {\mybox{?}};
\node at (1.5,1.2) {\mybox{?}};
\node at (2,1.2) {\mybox{?}};
\node at (2.5,1.2) {\mybox{?}};

\node at (0,1.8) {\mybox{?}};
\node at (0.5,1.8) {\mybox{?}};
\node at (1,1.8) {\mybox{?}};
\node at (1.5,1.8) {\mybox{?}};
\node at (2,1.8) {\mybox{?}};
\node at (2.5,1.8) {\mybox{?}};

\node at (0,2.4) {\mybox{?}};
\node at (0.5,2.4) {\mybox{?}};
\node at (1,2.4) {\mybox{?}};
\node at (1.5,2.4) {\mybox{?}};
\node at (2,2.4) {\mybox{?}};
\node at (2.5,2.4) {\mybox{?}};

\node at (-0.4,0) {5};
\node at (-0.4,0.6) {4};
\node at (-0.4,1.2) {3};
\node at (-0.4,1.8) {2};
\node at (-0.4,2.4) {1};

\node at (0,2.9) {3};
\node at (0.5,2.9) {6};
\node at (1,2.9) {4};
\node at (1.5,2.9) {1};
\node at (2,2.9) {5};
\node at (2.5,2.9) {2};
\end{tikzpicture}
\caption{An example of a pile-scramble shuffle on a $5 \times 6$ matrix}
\label{fig4}
\end{figure}

The pile-scramble shuffle was developed by Ishikawa et al. \cite{scramble}. One can perform this protocol in real world by putting the cards in each column into an envelope and then scrambling all envelopes together completely randomly.

\section{Main Protocol}
Let $m \times n$ be the size of the Ripple Effect grid, and let $k$ be the size of the biggest room. For each fixed cell with a number $x$, the prover $P$ publicly puts a sequence of face-down cards $E_k(x)$ on it. Then, for each empty cell with a number $x$ in $P$'s solution, $P$ secretly puts a sequence of face-down cards $E_k(x)$ on it. $P$ will first verify the distance condition, and then the room condition.

\subsection{Verification Phase for Distance Condition}
The most challenging part of our protocol is to verify the distance condition, which is equivalent to the following statement: for each cell $c$ with a number $x$, the first $x$ cells to the right and to the bottom of $c$ cannot have a number $x$.

First, we will show a protocol to verify that there is no number $x$ among the first $x$ cells to the right of $c$. Then, we apply the same protocol analogously in the direction to the bottom of $c$ as well.

Suppose that cell $c$ has a number $x$. Let $A_0$ be the sequence of cards on $c$. For each $i=1,2,...,k$, let $A_i$ be the sequence of cards on the $i$-th cell to the right of $c$, i.e. $A_1$ is on a cell right next to the right of $c$, $A_2$ is on a second cell to the right of $c$, and so on (if there are only $\ell < k$ cells to the right of $c$, we publicly put $E_k(0)$ in place of $A_i$ for every $i > \ell$).

\subsubsection{Intuition}
The intuition of this protocol is that we will first place the sequences $A_1, A_2,$ $..., A_k$ (each of them arranged vertically) in this order from left to right. Then, we will construct $k-1$ sequences $B_1,B_2,...,B_{k-1}$, all being $E_k(0)$ and arranged vertically, and place them between $A_x$ and $A_{x+1}$ (without revealing $x$). Finally, we will pick $k$ consecutive sequences $A_1,A_2,...,A_x,B_1,B_2,...,B_{k-x}$ and use the uniqueness verification protocol introduced in Section \ref{unique} to verify that none of them encodes the same number as $A_0$. Since $B_1,B_2,...,B_{k-x}$ are all $E_k(0)$, they do not affect the result of the uniqueness verification protocol.

\subsubsection{Formal Steps}
$P$ publicly performs the following steps.

\begin{enumerate}
	\item Construct a $(k+4) \times k$ matrix $M$ by the following procedures. See Figure \ref{fig5}.
	\begin{itemize}
		\item In Row 1, place a sequence $E_k(1)$.
		\item In Row 2, place the sequence $A_0$ (which is $E_k(x)$).
		\item In Row 3, place a sequence $E_k(1)$.
		\item In Row 4, place a sequence $E_k(0)$.
		\item In each Column $j$ ($j=1,2,...,k$), place the sequence $A_j$ arranged vertically from Row 5 to Row $k+4$.
	\end{itemize}
	
\begin{figure}[H]
\centering
\begin{tikzpicture}
\node at (0,-1.1) [rotate=-90]{$A_1$};
\node at (0.5,-1.1) [rotate=-90]{$A_2$};
\node at (1,-1.1) {...};
\node at (1.5,-1.1) [rotate=-90]{$A_k$};

\draw[->] (0,-0.4) -- (0,-0.8);
\draw[->] (0.5,-0.4) -- (0.5,-0.8);
\draw[->] (1.5,-0.4) -- (1.5,-0.8);

\node at (0,0) {\mybox{?}};
\node at (0.5,0) {\mybox{?}};
\node at (1,0) {...};
\node at (1.5,0) {\mybox{?}};

\node at (0,0.7) {\vdots};
\node at (0.5,0.7) {\vdots};
\node at (1,0.7) {\vdots};
\node at (1.5,0.7) {\vdots};

\node at (0,1.2) {\mybox{?}};
\node at (0.5,1.2) {\mybox{?}};
\node at (1,1.2) {...};
\node at (1.5,1.2) {\mybox{?}};

\node at (0,1.8) {\mybox{?}};
\node at (0.5,1.8) {\mybox{?}};
\node at (1,1.8) {...};
\node at (1.5,1.8) {\mybox{?}};

\node at (0,2.7) {\mybox{?}};
\node at (0.5,2.7) {\mybox{?}};
\node at (1,2.7) {...};
\node at (1.5,2.7) {\mybox{?}};
\draw[->] (1.8,2.7) -- (2.2,2.7);
\node at (2.7,2.7) {$E_k(0)$};

\node at (0,3.3) {\mybox{?}};
\node at (0.5,3.3) {\mybox{?}};
\node at (1,3.3) {...};
\node at (1.5,3.3) {\mybox{?}};
\draw[->] (1.8,3.3) -- (2.2,3.3);
\node at (2.7,3.3) {$E_k(1)$};

\node at (0,3.9) {\mybox{?}};
\node at (0.5,3.9) {\mybox{?}};
\node at (1,3.9) {...};
\node at (1.5,3.9) {\mybox{?}};
\draw[->] (1.8,3.9) -- (2.2,3.9);
\node at (2.5,3.9) {$A_0$};

\node at (0,4.5) {\mybox{?}};
\node at (0.5,4.5) {\mybox{?}};
\node at (1,4.5) {...};
\node at (1.5,4.5) {\mybox{?}};
\draw[->] (1.8,4.5) -- (2.2,4.5);
\node at (2.7,4.5) {$E_k(1)$};

\draw[] (-0.3,-0.3) -- (-0.3,5.9);
\draw[] (-2.1,4.9) -- (1.8,4.9);

\node at (-0.9,0) {$k+4$};
\node at (-0.6,0.7) {\vdots};
\node at (-0.6,1.2) {6};
\node at (-0.6,1.8) {5};
\node at (-0.6,2.7) {4};
\node at (-0.6,3.3) {3};
\node at (-0.6,3.9) {2};
\node at (-0.6,4.5) {1};
\node at (-1.6,2.25) {Row};

\node at (0,5.2) {1};
\node at (0.5,5.2) {2};
\node at (1,5.2) {...};
\node at (1.5,5.2) {$k$};
\node at (0.75,5.7) {Column};
\end{tikzpicture}
\caption{A $(k+4) \times k$ matrix $M$ constructed in Step 1}
\label{fig5}
\end{figure}
	
	\item Apply the pile-shifting shuffle to $M$.
	\item Turn over all cards in Row 2 of $M$. Locate the position of a \mybox{$\heartsuit$}. Suppose it is at Column $j_1$. Turn over all face-up cards.
	\item Shift the columns of $M$ to the right by $k-j_1$ columns, i.e. move every Column $\ell$ to Column $\ell+k-j_1$ (where Column $\ell'$ means Column $\ell'-k$ for $\ell'>k$). Observe that after this step, $A_x$ will locate at the rightmost column.
	\item Divide $M$ into a $2 \times k$ matrix $M_1$ and a $(k+2) \times k$ matrix $M_2$. $M_1$ consists of the topmost two rows of $M$, while $M_2$ consists of everything below $M_1$. Each cell $M(i+2,j)$ ($i,j \geq 1$) of $M$ will become a cell $M_2(i,j)$ of a new matrix $M_2$.
	\item Apply the rearrangement protocol to $M_1$. Observe that we now have $E_k(1)$ in Row 1 and $A_0$ in Row 2 of $M_1$. From now on, we will perform operations only on $M_2$ while $M_1$ will be left unchanged.
	\item Append $k-1$ columns to the right of the matrix $M_2$ by the following procedures, making $M_2$ become a $(k+2) \times (2k-1)$ matrix. See Figure \ref{fig6}.
	\begin{itemize}
		\item In Row 1, place a sequence $E_{k-1}(0)$ from Column $k+1$ to Column $2k-1$.
		\item In Row 2, place a sequence $E_{k-1}(1)$ from Column $k+1$ to Column $2k-1$.
		\item In each Column $k+j$ ($j=1,2,...,k-1$), place a sequence $E_k(0)$ arranged vertically from Row 3 to Row $k+2$. We call this sequence $B_j$.
	\end{itemize}
	
\begin{figure}[H]
\centering
\begin{tikzpicture}
\node at (0,-1.3) [rotate=-90]{$A_{x+1}$};
\node at (0.5,-1.3) [rotate=-90]{$A_{x+2}$};
\node at (1,-1.1) {...};
\node at (1.5,-1.1) [rotate=-90]{$A_x$};
\node at (2.3,-1.1) [rotate=-90]{$B_1$};
\node at (2.8,-1.1) [rotate=-90]{$B_2$};
\node at (3.3,-1.1) {...};
\node at (3.8,-1.3) [rotate=-90]{$B_{k-1}$};

\draw[->] (0,-0.4) -- (0,-0.8);
\draw[->] (0.5,-0.4) -- (0.5,-0.8);
\draw[->] (1.5,-0.4) -- (1.5,-0.8);
\draw[->] (2.3,-0.4) -- (2.3,-0.8);
\draw[->] (2.8,-0.4) -- (2.8,-0.8);
\draw[->] (3.8,-0.4) -- (3.8,-0.8);

\node at (0,0) {\mybox{?}};
\node at (0.5,0) {\mybox{?}};
\node at (1,0) {...};
\node at (1.5,0) {\mybox{?}};
\node at (2.3,0) {\mybox{?}};
\node at (2.8,0) {\mybox{?}};
\node at (3.3,0) {...};
\node at (3.8,0) {\mybox{?}};

\node at (0,0.7) {\vdots};
\node at (0.5,0.7) {\vdots};
\node at (1,0.7) {\vdots};
\node at (1.5,0.7) {\vdots};
\node at (2.3,0.7) {\vdots};
\node at (2.8,0.7) {\vdots};
\node at (3.3,0.7) {\vdots};
\node at (3.8,0.7) {\vdots};

\node at (0,1.2) {\mybox{?}};
\node at (0.5,1.2) {\mybox{?}};
\node at (1,1.2) {...};
\node at (1.5,1.2) {\mybox{?}};
\node at (2.3,1.2) {\mybox{?}};
\node at (2.8,1.2) {\mybox{?}};
\node at (3.3,1.2) {...};
\node at (3.8,1.2) {\mybox{?}};

\node at (0,1.8) {\mybox{?}};
\node at (0.5,1.8) {\mybox{?}};
\node at (1,1.8) {...};
\node at (1.5,1.8) {\mybox{?}};
\node at (2.3,1.8) {\mybox{?}};
\node at (2.8,1.8) {\mybox{?}};
\node at (3.3,1.8) {...};
\node at (3.8,1.8) {\mybox{?}};

\node at (-2.3,2.7) {$E_k(0)$};
\draw[->] (-0.2,2.6) to[out=-150, in=-30] (-1.8,2.6);
\node at (0,2.7) {\mybox{?}};
\node at (0.5,2.7) {\mybox{?}};
\node at (1,2.7) {...};
\node at (1.5,2.7) {\mybox{?}};
\node at (2.3,2.7) {\mybox{?}};
\node at (2.8,2.7) {\mybox{?}};
\node at (3.3,2.7) {...};
\node at (3.8,2.7) {\mybox{?}};
\draw[->] (4.1,2.7) -- (4.5,2.7);
\node at (5.2,2.7) {$E_{k-1}(1)$};

\node at (-2.92,3.3) {$E_k(k-x+1)$};
\draw[->] (-0.2,3.2) to[out=-150, in=-30] (-1.8,3.2);
\node at (0,3.3) {\mybox{?}};
\node at (0.5,3.3) {\mybox{?}};
\node at (1,3.3) {...};
\node at (1.5,3.3) {\mybox{?}};
\node at (2.3,3.3) {\mybox{?}};
\node at (2.8,3.3) {\mybox{?}};
\node at (3.3,3.3) {...};
\node at (3.8,3.3) {\mybox{?}};
\draw[->] (4.1,3.3) -- (4.5,3.3);
\node at (5.2,3.3) {$E_{k-1}(0)$};

\draw[] (-0.3,-0.3) -- (-0.3,5.3);
\draw[] (-2.1,3.7) -- (4.1,3.7);

\node at (-0.9,0) {$k+2$};
\node at (-0.6,0.7) {\vdots};
\node at (-0.6,1.2) {4};
\node at (-0.6,1.8) {3};
\node at (-0.6,2.7) {2};
\node at (-0.6,3.3) {1};
\node at (-1.6,1.65) {Row};

\node at (0,4) {1};
\node at (0.5,4) {2};
\node at (1,4) {...};
\node at (1.5,4) {$k$};
\node at (2.3,4.2) [rotate=-90]{$k+1$};
\node at (2.8,4.2) [rotate=-90]{$k+2$};
\node at (3.3,4) {...};
\node at (3.8,4.3) [rotate=-90]{$2k-1$};
\node at (1.9,5.1) {Column};
\end{tikzpicture}
\caption{$(k+2) \times (2k-1)$ matrix $M_2$ after the modification in Step 7}
\label{fig6}
\end{figure}

	\item Apply the pile-shifting shuffle to $M_2$.
	\item Turn over all cards in Row 1 of $M_2$. Locate the position of a \mybox{$\heartsuit$}. Suppose it is at Column $j_2$. Turn over all face-up cards.
	\item For each $i = 1,2,...,k$, let $S_i$ denote a sequence of card arranged vertically at Column $j_2+i-1$ (where Column $\ell'$ means Column $\ell'-(2k-1)$ for $\ell'>2k-1$) from Row 3 to Row $k+2$ of $M_2$. Observe that $(S_1,S_2,...,S_k) = (A_1,A_2,...,A_x,B_1,B_2,...,B_{k-x})$. Then, construct a $(k+2) \times k$ matrix $N$ with Row 1 consisting of a sequence $E_k(1)$ taken from Row 1 of $M_1$, Row 2 consisting of the sequence $A_0$ taken from Row 2 of $M_1$, and each Row $i+2$ ($i=1,2,...,k$) consisting of the sequence $S_i$ taken from $M_2$.
	\item Apply the uniqueness verification protocol on $N$. The intuition of this step is to verify that none of the sequences $A_1,A_2,...,A_x$ encodes the same number as $A_0$ (while $B_1,B_2,...,B_{k-x}$ are all $E_k(0)$).
	\item Apply the rearrangement protocol on $N$, put $A_0$ back onto $c$, and put $S_1,S_2,...,S_k$ back to their corresponding columns in $M_2$.
	\item Apply the pile-shifting shuffle to $M_2$.
	\item Turn over all cards in Row 2 of $M_2$. Locate the position of a \mybox{$\heartsuit$}. Suppose it is at Column $j_3$. Turn over all face-up cards.
	\item Shift the columns of $M_2$ to the right by $k+1-j_3$ columns, i.e. move every Column $\ell$ to Column $\ell+k+1-j_3$ (where Column $\ell'$ means Column $\ell'-(2k-1)$ for $\ell'>2k-1$). Then, remove Columns $k+1,k+2,...,2k-1$ from $M_2$, making $M_2$ become a $(k+2) \times k$ matrix again. Observe that the columns we just removed are exactly the same $k-1$ columns we previously appended to $M_2$.
	\item Apply the rearrangement protocol on $M_2$ and put the sequences $A_1,A_2,...,A_k$ back onto their corresponding cells on the Ripple Effect grid.
\end{enumerate}

$P$ performs these steps analogously in the direction to the right and bottom of every cell in the grid. If every cell passes the verification, $P$ continues to the verification phase for room condition.

\subsection{Verification Phase for Room Condition}
The room condition of Ripple Effect is exactly the same as that of Makaro, and hence can be verified by a subprotocol in \cite{makaro}. Since this is the final step of our protocol, after we finish verifying each room, we do not have to rearrange cards back to their original positions or put them back onto their cells.

$P$ will verify each room separately. For a room $R$ with size $s$, let $A_1,A_2,...,A_s$ be the sequences of cards on the cells in $R$ in any order. To verify room $R$, $P$ publicly performs the following steps.

\begin{enumerate}
	\item Construct a $k \times s$ matrix $M$ by the following procedures: in each Column $j$ ($j=1,2,...,s$), place the sequence $A_j$ arranged vertically from Row 1 to Row $k$.
	\item Apply the pile-scramble shuffle to $M$.
	\item Turn over all cards in $M$. If all columns of $M$ are a permutation of $E_k(1),$ $E_k(2),...,$ $E_k(s)$ arranged vertically, then the protocol continues. Otherwise, $V$ rejects and the protocol terminates.
\end{enumerate}

$P$ performs these steps for every room. If every room passes the verification, then $V$ accepts.

In total, our protocol uses $kmn+2k^2+4k-2 = \Theta(kmn)$ cards.

\section{Proof of Security}
We will prove the perfect completeness, perfect soundness, and zero-knowledge properties of our protocol. We omit the proofs of the verification phase for room condition as they have been shown in \cite{makaro}.

\begin{lemma}[Perfect Completeness] \label{lem1}
If $P$ knows a solution of the Ripple Effect puzzle, then $V$ always accepts.
\end{lemma}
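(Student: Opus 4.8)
The plan is to treat the two verification phases separately and, in each, to argue that an honest encoding of a valid solution never triggers a rejection. Since the rearrangement protocol and the various shifts contain no rejection step, the only places where $V$ can reject are Step 4 of the uniqueness verification protocol (invoked in Step 11 of the distance-condition phase) and Step 3 of the room-condition check; it therefore suffices to show that neither fires. The crucial general observation is that every shuffle (pile-shifting or pile-scramble) is merely a permutation of columns, so once the card contents are fixed the position of each $\heartsuit$ after any shift is completely determined; completeness thus reduces to tracking these positions and invoking the stated correctness of the two subprotocols.

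For the distance condition I would first establish the invariant claimed in Steps 4 and 10, namely that after the shift in Step 4 the column carrying $A_x$ sits in the rightmost position, and that after Steps 9--10 the window of $k$ sequences beginning at $A_1$ equals $(A_1,\ldots,A_x,B_1,\ldots,B_{k-x})$. This is a bookkeeping computation: composing the random cyclic shift of Step 2 with the corrective shift of Step 4 sends original Column $\ell$ to Column $\ell-x \pmod{k}$, so $A_x$ lands rightmost and $A_1$ lands in Column $k-x+1$, which is exactly where the single $\heartsuit$ of Row 1 of $M_2$ sits; the sequences $B_j$, appended in Columns $k+1,\ldots,2k-1$, then occupy precisely the cyclic gap between $A_x$ and $A_{x+1}$, so reading $k$ consecutive columns starting from $A_1$ yields $A_1,\ldots,A_x$ followed by $B_1,\ldots,B_{k-x}$. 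Granting this invariant, Step 11 runs the uniqueness verification of Section \ref{unique} with $S_0=A_0=E_k(x)$; by that protocol's specification it rejects only if some $S_i$ ($i\geq 1$) encodes the same number as $A_0$. Since each $B_j=E_k(0)$ encodes $0\neq x$, and since the distance condition of the valid solution guarantees that none of the first $x$ cells to the right of $c$ --- that is, none of $A_1,\ldots,A_x$ --- holds the number $x$, no $S_i$ matches $A_0$ and the check passes. The identical argument applies verbatim to the left, top, and bottom directions.

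For the room condition I would simply invoke the correctness of the Makaro subprotocol of \cite{makaro}: after the pile-scramble shuffle in Step 2, revealing all cards in Step 3 exhibits a permutation of $E_k(1),\ldots,E_k(s)$ exactly when the room holds the numbers $1,2,\ldots,s$, which the room condition of a valid solution guarantees. Hence every room passes and $P$ accepts.

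I expect the main obstacle to be the modular index bookkeeping that certifies the invariant $(S_1,\ldots,S_k)=(A_1,\ldots,A_x,B_1,\ldots,B_{k-x})$ --- in particular, verifying that the two shifts compose to place $A_x$ rightmost and that the filler sequences $B_j$ are inserted in cyclic order precisely in the gap between $A_x$ and $A_{x+1}$, so that the $k$-column window selected from $A_1$ contains exactly the first $x$ neighbours plus the right number of zero-fillers. Once this structural claim is in hand, completeness follows immediately from the stated behaviour of the uniqueness verification and the room-check subprotocol.
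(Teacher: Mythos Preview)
Your proposal is correct and follows essentially the same approach as the paper's proof: isolate the two rejection points, cite \cite{makaro} for the room condition, and for the distance condition track the column indices to establish the key invariant $(S_1,\ldots,S_k)=(A_1,\ldots,A_x,B_1,\ldots,B_{k-x})$ before invoking the correctness of the uniqueness verification subprotocol. Your bookkeeping is in fact more explicit than the paper's (which asserts the invariant with less modular arithmetic), but the structure and ideas are identical.
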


\begin{proof}
First, we will show that the uniqueness verification protocol will pass if none of $S_1,S_2,...,S_a$ encodes the same number as $S_0$. Suppose that $S_0$ encodes a number $z>0$. A \mybox{$\heartsuit$} in Row 2 will locate at Column $z$. Since none of $S_1,S_2,...,S_a$ encodes the number $z$, all cards below Row 2 in the same column as that \mybox{$\heartsuit$} will be all \mybox{$\clubsuit$}s. This remains true after we rearrange the columns in Step 2. Therefore, the verification in Step 4 will pass.

Now consider the main protocol. Suppose that $P$ knows a solution of the puzzle and places cards on each cell according to his/her solution. The verification phase for room condition will pass \cite{makaro}.

Consider the verification phase for distance condition. In Step 1, a \mybox{$\heartsuit$} in Row 2 is at the same column as $A_x$, and will always be. Therefore, in Step 4, the column containing $A_x$ will move to the rightmost column.

In Step 7, the order of the sequences (which are arranged vertically from Row 3 to Row $k+2$) from the leftmost column to the rightmost column is $A_{x+1},A_{x+2},...,A_k,A_1,A_2,...,A_x,$ $B_1,B_2,...,B_{k-1}$. Also, a \mybox{$\heartsuit$} in Row 1 is at the same column as $A_1$, and a \mybox{$\heartsuit$} in Row 3 is at the same column as $B_1$, and they will always be.

In Step 10, since $A_1$ locates at Column $j$, the sequences $S_1,S_2,...,S_k$ will be exactly $A_1,A_2,...,A_x,B_1,B_2,...,B_{k-x}$ in this order. Therefore, in Step 11, the uniqueness verification protocol will pass because none of the sequences $A_1,A_2,...,A_x$ encodes the number $x$, and $B_1,B_2,...,B_{k-x}$ all encode 0.

Since this is true for every cell and every direction (to the right, left, top, and bottom), the verification phase for distance condition will pass, hence $V$ will always accept.
\end{proof}

\begin{lemma}[Perfect Soundness] \label{lem2}
If $P$ does not know a solution of the Ripple Effect puzzle, then $V$ always rejects.
\end{lemma}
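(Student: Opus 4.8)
The plan is to establish perfect soundness by proving the contrapositive in its strongest form: for every configuration of committed face-down cards that does not encode a valid solution, and for \emph{every} outcome of the random shuffles, $V$ rejects. Since $V$ accepts only if both the distance-condition phase and the room-condition phase pass, it suffices to show that any defect in $P$'s configuration --- a malformed cell encoding, a room-condition violation, or a distance-condition violation --- is caught with certainty by one of the two phases. I would organize the whole argument around the soundness of the uniqueness verification protocol of Section~\ref{unique}, then combine it with the result of \cite{makaro} for the room phase.

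First I would prove the soundness of the uniqueness verification subprotocol: if $S_0$ encodes a positive number $z$ and some $S_i$ $(1 \le i \le a)$ also encodes $z$, then $V$ always rejects. The point is that a $\heartsuit$ in Row~2 sits in the column corresponding to position $z$, and a $\heartsuit$ in Row~$i+2$ sits in the column of the \emph{same} position $z$. A pile-shifting shuffle is a cyclic permutation of the columns, so it moves both hearts by the identical offset and leaves them in a common column. Hence whatever column $j$ is revealed in Step~3, the card at Row~$i+2$ of Column~$j$ is a $\heartsuit$, so Step~4 exposes a $\heartsuit$ and rejects. Crucially this conclusion holds for every shuffle offset $r$, which is exactly what perfect (rather than merely statistical) soundness demands.

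Next I would treat the room phase and the well-formedness of the encodings together. By the analysis of the corresponding Makaro subprotocol \cite{makaro}, the room phase for a room $R$ of size $s$ accepts if and only if the multiset of its $s$ cell-sequences equals $\{E_k(1),\dots,E_k(s)\}$; this single check simultaneously certifies that every cell of $R$ carries a well-formed $E_k(\cdot)$ sequence with exactly one $\heartsuit$ and that the room condition holds. I would then analyze the distance phase assuming well-formed cells (justified in the next paragraph) and track the bookkeeping deterministically: because $A_0=E_k(x)$, the $\heartsuit$ revealed in Row~2 pins $A_x$ to the rightmost column in Step~4 independently of the shuffle, and the Row~1 $\heartsuit$ of $M_2$ likewise pins $A_1$, so that in Step~10 one obtains $(S_1,\dots,S_k)=(A_1,\dots,A_x,B_1,\dots,B_{k-x})$ for every shuffle outcome. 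Since each $B_j=E_k(0)\ne E_k(x)$, the uniqueness verification of Step~11 applied with $S_0=A_0$ rejects precisely when some $A_i$ with $1\le i\le x$ encodes $x$ --- that is, exactly when a cell within distance $x$ to the right of $c$ repeats the number $x$, which is a distance violation. Running this analysis in all four directions for every cell forces rejection whenever the distance condition fails.

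The main obstacle I anticipate is a timing subtlety: the well-formedness certificate furnished by the room phase is obtained \emph{after} the distance phase has already been executed, yet the correctness of the distance-phase bookkeeping (for instance, that Step~3 reveals a single well-defined $\heartsuit$) presupposes well-formed encodings. I would resolve this by observing that $P$ commits the face-down cards once and that every protocol operation is a public, reversible rearrangement, so the \emph{same} fixed cell-sequences are inspected by both phases. Consequently, if the room phase accepts then its well-formedness certificate applies verbatim to the identical sequences used earlier in the distance phase, and if the room phase rejects then $V$ has already rejected. Combining the three cases --- a malformed encoding or a room violation is caught by the room phase, while a distance violation is caught by the distance phase --- shows that any configuration that is not a valid solution is rejected with probability one, completing the proof.
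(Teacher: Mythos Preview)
Your proposal is correct and follows essentially the same route as the paper: establish soundness of the uniqueness verification subprotocol, defer the room case to \cite{makaro}, and for a distance violation argue that the bookkeeping of Steps~1--10 yields $(S_1,\dots,S_k)=(A_1,\dots,A_x,B_1,\dots,B_{k-x})$ so that the offending neighbour appears among the $S_i$ and forces rejection in Step~11. Your treatment is in fact slightly more careful than the paper's, which tacitly assumes well-formed $E_k(\cdot)$ encodings; your observation that the room phase of \cite{makaro} certifies well-formedness of the \emph{same} committed sequences used earlier in the distance phase is a genuine (and correct) addition that closes a gap the paper leaves implicit.
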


\begin{proof}
First, we will show that the uniqueness verification protocol will fail if at least one of $S_1,S_2,...,S_a$ encodes the same number as $S_0$. Suppose that $S_0$ and $S_d$ ($d>0$) both encode a number $z>0$. A \mybox{$\heartsuit$} in Row 2 will locate at Column $z$. Since $S_d$ also encodes the number $z$, a card in Row $d$ in the same column as that \mybox{$\heartsuit$} will be a \mybox{$\heartsuit$}. This remains true after we rearrange the columns in Step 2. Therefore, the verification in Step 4 will fail.

Now consider the main protocol. Suppose that $P$ does not know a solution of the puzzle.\footnote{Our protocol is a \textit{proof-of-knowledge} because the cards on all cells constitute a solution itself. Formally, we can let a machine, called the \textit{extractor}, that turns over all cards on every cell after $P$ places them. If $P$ can convince $V$, then the extractor can obtain the solution from the cards.} If the cards on some cell are not in a correct format ($E_k(x)$ for some integer $x \leq k$), then the uniqueness verification protocol for that cell will fail in Step 3. So, we assume that the cards on every cell are in a correct format and thus correspond to some number. Since $P$ does not know the solution, these numbers must violate either the room condition or the distance condition. If they violate the room condition, the verification phase for room condition will fail \cite{makaro}.

Suppose that the numbers in the grid violate the distance condition. There must be two cells $c$ and $c'$ in the same row or column having the same number $x$, where $c'$ locates on the right or the bottom of $c$ with $\ell < x$ cells of space between them.

Consider when $P$ performs the verification phase for distance condition for $c$ in the direction towards $c'$. The sequence on the cell $c'$ will be $A_{\ell+1}$ By the same reason as in the proof of Lemma \ref{lem1}, in Step 10, the sequences $S_1,S_2,...,S_k$ will be exactly $A_1,A_2,...,A_x,B_1,B_2,...,$ $B_{k-x}$ in this order and thus include $A_{\ell+1}$. Therefore, in Step 11, the uniqueness verification protocol will fail because $A_\ell$ encodes the number $x$, hence $V$ will always reject.
\end{proof}

\begin{lemma}[Zero-Knowledge] \label{lem3}
During the verification phase, $V$ gets no extra information about $P$'s solution of the Ripple Effect puzzle.
\end{lemma}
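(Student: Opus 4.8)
The plan is to use the standard simulation paradigm for card-based protocols. Rather than building the simulator $S$ by hand and reasoning about its black-box interaction with $V$, I will exploit the fact that our protocol is entirely public-coin: every shuffle is a uniformly random pile-shifting or pile-scramble shuffle, and every other operation (turning cards over, shifting, splitting, appending, removing) is deterministic and public. Consequently it suffices to show that the joint distribution of all cards that are ever turned face-up depends only on public information --- the grid, its partition into rooms, and the bound $k$ --- and is independent of $P$'s particular solution $w$. Once this is established, the simulator simply samples the shuffle randomness itself, applies the same deterministic operations to dummy cards, and outputs the resulting face-up transcript; its output is then identically distributed to the real transcript regardless of $w$.

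First I would enumerate every reveal and sort it into two types. The first type is a \emph{row reveal}: Step 3 and Step 14 of the distance-condition phase, Step 9, the Row 1 reveals that occur inside each invocation of the rearrangement protocol (Steps 6, 12, 16), and Step 3 of the uniqueness verification protocol. In each of these, the revealed row has the form $E_m(1)$, $E_m(0)$, or $A_0 = E_k(x)$ --- i.e. a row containing at most one $\heartsuit$ --- and it is read immediately after a pile-shifting shuffle that acts on all of its columns. I would argue that the random cyclic rotation carries the unique $\heartsuit$ to a position that is uniform over the $m$ columns, so the revealed pattern is a single $\heartsuit$ in a uniformly random column together with $\clubsuit$s elsewhere. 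This distribution is fixed once $m$ is fixed; in particular it does not reveal the original position of the $\heartsuit$, hence leaks nothing about $x$ or about which cell carried which number.

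The second, and main, type is the reveal inside Step 4 of the uniqueness verification protocol, where one card from each of $S_1,\dots,S_a$ is turned over. This step requires the most care, because these sequences genuinely encode secret numbers of the solution. The key observation is that we only ever reach this reveal while the protocol is continuing, i.e. conditioned on $V$ not rejecting; and by the analysis behind Lemmas \ref{lem1} and \ref{lem2}, the protocol continues exactly when none of $S_1,\dots,S_a$ shares the $\heartsuit$-column of $S_0$, in which case all $a$ revealed cards are $\clubsuit$. Thus, conditioned on reaching this reveal, the outcome is the deterministic all-$\clubsuit$ pattern, independent of $w$. I would also note that although the opened column $j$ equals the $\heartsuit$-column of $S_0$ located one step earlier, both are determined by the same single shuffle and are jointly distributed as ``the $\heartsuit$ of $S_0$ at a uniform column, with all cards below it $\clubsuit$,'' so their correlation carries no extra information. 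The room-condition reveal (Step 3 of that phase) is handled the same way: after the pile-scramble shuffle the $s$ columns appear in a uniformly random order, and conditioned on acceptance the revealed multiset is always $\{E_k(1),\dots,E_k(s)\}$, which is fixed by the public room size $s$.

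Finally I would assemble these observations: since each reveal is either a single $\heartsuit$ in a uniformly random position or a fixed pattern (all-$\clubsuit$, or a fixed multiset in uniformly random order), and since the fresh independent shuffle preceding each reveal re-randomizes the relevant positions, the whole transcript factors into distributions that depend only on public data. The simulator reproduces it by drawing the same uniform rotations and permutations. The hard part will be the Step 4 reveal of the uniqueness verification protocol --- specifically, arguing cleanly that conditioning on ``the protocol continues'' collapses the revealed cards to a solution-independent constant, while the correlated disclosure of the $\heartsuit$-column of $S_0$ remains uniform; every other reveal reduces to the one-line fact that a pile-shifting shuffle sends a lone $\heartsuit$ to a uniformly random column.
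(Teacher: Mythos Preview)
Your proposal is correct and takes essentially the same approach as the paper: enumerate every card reveal, observe that each row reveal is preceded by a fresh pile-shifting (or pile-scramble) shuffle so the lone $\heartsuit$ lands in a uniformly random column, and note that the column reveal in Step~4 of the uniqueness verification protocol is deterministically all~$\clubsuit$ conditioned on the protocol continuing. Your write-up is in fact slightly more careful than the paper's---you explicitly address the joint distribution of the Step~3/Step~4 reveals in the uniqueness subprotocol and handle the room-condition reveal directly rather than deferring to~\cite{makaro}---but the underlying argument is identical.
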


\begin{proof}
To prove the zero-knowledge property, it is sufficient to prove that all distributions of the values that appear when $P$ turns over cards can be simulated by a simulator $S$ without knowing $P$'s solution.

\begin{itemize}
	\item In the rearrangement protocol:
	\begin{itemize}
		\item Consider Step 2 where we turn over all cards in Row 1. This occurs right after we applied the pile-shifting shuffle to the matrix. Therefore, a \mybox{$\heartsuit$} has an equal probability to appear at each of the $b$ columns, hence this step can be simulated by $S$ without knowing $P$'s solution.
	\end{itemize}
	
	\item In the uniqueness verification protocol:
	\begin{itemize}
		\item Consider Step 3 where we turn over all cards in Row 2. This occurs right after we applied the pile-shifting shuffle to the matrix. Therefore, a \mybox{$\heartsuit$} has an equal probability to appear at each of the $b$ columns, hence this step can be simulated by $S$ without knowing $P$'s solution.
		\item Consider Step 4 where we turn over all cards in Column $j$ from Row 3 to Row $a+2$. If the verification passes, the cards we turn over must be all \mbox{\mybox{$\clubsuit$}s}, hence this step can be simulated by $S$ without knowing $P$'s solution.
	\end{itemize}
	
	\item In the verification phase for room condition:
	\begin{itemize}
		\item Consider Step 3 where we turn over all cards in Row 2 of $M$. This occurs right after we applied the pile-shifting shuffle to $M$. Therefore, a \mybox{$\heartsuit$} has an equal probability to appear at each of the $k$ columns, hence this step can be simulated by $S$ without knowing $P$'s solution.
		\item Consider Step 9 where we turn over all cards in Row 1 of $M_2$. This occurs right after we applied the pile-shifting shuffle to $M_2$. Therefore, a \mybox{$\heartsuit$} has an equal probability to appear at each of the $2k-1$ columns, hence this step can be simulated by $S$ without knowing $P$'s solution.
		\item Consider Step 14 where we turn over all cards in Row 2 of $M_2$. This occurs right after we applied the pile-shifting shuffle to $M_2$. Therefore, a \mybox{$\heartsuit$} has an equal probability to appear at each of the $2k-1$ columns, hence this step can be simulated by $S$ without knowing $P$'s solution.
	\end{itemize}
\end{itemize}

Therefore, we can conclude that $V$ gets no extra information about $P$'s solution.
\end{proof}

\section{Analysis of Nikoli Puzzles and Related Functions} \label{analysis}
In this section, we classify Nikoli logic puzzles into different types in order to highlight the significance of our protocol. See Table \ref{table1}.

\begin{table}[H]
	\centering
	\begin{tabular}{|c|c|c|}
		\hline
		\textbf{Type of Puzzle} & \textbf{\thead{Type A: Numbers as\\ public information}} & \textbf{\thead{Type B: Numbers as\\ private information}} \\ \hline
		\textbf{Type 0: No number} & \multicolumn{2}{c|}{Norinori, Masyu, LITS, Mid-loop, etc.} \\ \hline
		\textbf{Type 1: Nominal numbers} & Numberlink, Hitori, etc. & Sudoku, Nansuke, etc. \\ \hline
		\textbf{Type 2: Ordinal numbers} &  & Makaro \\ \hline
		\textbf{Type 3: Cardinal numbers} & \thead{Akari, Juosan, Shikaku,\\ Slitherlink, Nurikabe, etc.} & Kakuro, Ripple Effect \\ \hline
	\end{tabular}
	\medskip
	\caption{Examples of Nikoli logic puzzles in each type} \label{table1}
\end{table}

Several Nikoli puzzles, including Norinori, Masyu, LITS, and Mid-loop, do not involve numbers at all. We call these puzzle Type 0. Verifying a solution of a Type 0 puzzle can be done by only verifying functions of nominal numbers, e.g. a subprotocol in a ZKP for Norinori \cite{norinori} verifies that a given number appears in a given list without revealing its position in the list.

Some Nikoli puzzles, such as Sudoku, Numberlink, Hitori, and Nansuke, do involve numbers, but only as nominal data, i.e. numbers are treated as labels or symbols. We call these puzzle Type 1. For puzzles in this type, if we replace every number $x$ by $f(x)$ for any function $f: \mathbb{Z}^+ \rightarrow \mathbb{Z}^+$, the conditions will remain exactly the same. Like Type 0 puzzles, verifying a solution of a Type 1 puzzle can be done by only verifying functions of nominal numbers, e.g. a subprotocol in a ZKP for Sudoku \cite{sudoku0} verifies that a given list is a permutation of all given numbers in some order without revealing their order.

We further divide Type 1 puzzles into two subtypes, 1A and 1B. For Type 1A puzzles, numbers appear only as public information, i.e. already given in the puzzle but not as parts of a solution. Numberlink and Hitori are Type 1A puzzles. For Type 1B puzzles, numbers also appear as private information, i.e. as parts of a solution. Sudoku and Nansuke are Type 1B puzzles. The difference of these two subtypes is that when verifying a solution of a Type 1B puzzle, we cannot reveal any number, which makes developing ZKPs for Type 1B puzzles generally harder.

The next type of Nikoli puzzles is Type 2, which consists of puzzles involving numbers as ordinal data, i.e. numbers are comparable. For puzzles in this type, if we replace every number $x$ by $f(x)$ for any \textit{increasing} function $f: \mathbb{Z}^+ \rightarrow \mathbb{Z}^+$, the conditions will remain exactly the same. We find Makaro as the only Nikoli puzzle classified into this type. In Makaro, numbers are parts of a solution and thus are private information, so we put Makaro into Type 2B. Verifying a solution of a Type 2 puzzle needs to verify functions of ordinal numbers, e.g. a subprotocol in a ZKP for Makaro \cite{makaro} verifies that a given number in a list is the largest one in that list without revealing any number in the list.

Type 3 puzzles are the ones involving cardinal numbers, i.e. using the mathematical meaning of numbers. Most of Nikoli puzzles, including Akari, Juosan, Shikaku, Slitherlink, and Nurikabe, use cardinal numbers only as public information and hence are classified into Type 3A. Verifying a solution of a Type 3A puzzle needs to verify functions of cardinal numbers, e.g. a subprotocol in ZKP for Nurikabe \cite{nurikabe} verifies the existence of $k$ connected white cells in a grid, given a number $k$. Note that during the verification, all numbers are public information and can be revealed, making the verification easier than that of Type 3B puzzles.

Lastly, Type 3B puzzles use cardinal numbers as parts of a solution, and thus as private information. We find Kakuro and Ripple Effect as the only two Nikoli puzzles in this type. However, the special property of Kakuro, which differs from other Type 3 Nikoli puzzles, is that numbers are never actually used in the sense of cardinality (such as counting the numbers of cells) but are only used for addition. Observe that in Kakuro, if we replace every number $x$ by $f(x)$ for any \textit{linear} function $f: \mathbb{Z}^+ \rightarrow \mathbb{Z}^+$, the conditions will remain exactly the same. (This property is not true for all other Type 3 Nikoli puzzles.) In particular, a subprotocol in a ZKP for Kakuro \cite{kakuro} verifies that the sum of all numbers in a list is equal to a given number. This leaves Ripple Effect as the only Nikoli puzzle that actually uses cardinal number as private information.

As a result, we consider verifying a solution of Ripple Effect to be harder than that of other puzzles, and the techniques used in the ZKP for Kakuro cannot be applied to our protocol. In particular, given a secret number $x$ and a list of numbers, our protocol verifies that $x$ does not appear among the first $x$ numbers in the list without revealing $x$ or any number in the list. This function is significantly harder to verify \textit{without revealing $x$}, and thus requires novel techniques not used before in other protocols.

\section{Future Work}
We developed a physical protocol of ZKP for the Ripple Effect puzzle using $\Theta(kmn)$ cards. A possible future work is to develop a protocol for this puzzle that requires asymptotically fewer number of cards, or the one that can be performed using a deck of all different cards.

Another challenging future work is to find other Type 3B (non-Nikoli) logic puzzles and develop a ZKP for such puzzles, which may involve physical verification of more complicated number-related functions.

\end{document}